\newtheorem{theorem}{Theorem}
\newtheorem{lemma}[theorem]{Lemma}
\begin{document}

\title{Physical Reduced Stochastic Equations for Continuously Monitored Non-Markovian Quantum Systems with a Markovian Embedding}

\author{Hendra I. Nurdin\thanks{H. I. Nurdin is with the School of Electrical Engineering and 
Telecommunications,  UNSW Australia,  Sydney NSW 2052, Australia (\texttt{email: h.nurdin@unsw.edu.au})}}

\maketitle

\begin{abstract}
An effective approach to modeling non-Markovian quantum systems is to embed a principal (quantum) system  of interest into a larger quantum system. A widely employed embedding is one that uses another quantum system, referred to  as the auxiliary system, which is coupled to the principal system, and both the principal and auxiliary can be coupled to quantum white noise processes. The principal and auxiliary together form a quantum Markov system and the quantum white noises act as a bath (environment) for this system. 

Recently it was shown that the conditional  evolution of the principal system in this embedding under continuous monitoring by a travelling quantum probe can be expressed as a system of  coupled stochastic differential equations (SDEs) that involve only operators of the principal system. The reduced conditional  state of the principal only (conditioned on the measurement outcomes) is determined by the ``diagonal" blocks of this coupled systems of SDEs. It is shown here that the ``off-diagonal" blocks can be exactly eliminated up to their initial conditions, leaving a reduced closed system of SDEs  for the diagonal blocks only. Under additional conditions the  off-diagonal initial conditions can be made to vanish. This new closed system of equations, which includes an integration term involving a two-time stochastic kernel, represents the non-Markovian stochastic dynamics of the principal system under continuous-measurement. The system  of equations determine the reduced conditional state of the principal only and may be viewed as a stochastic Nakajima-Zwanzig type of equation for continuously monitored non-Markovian quantum systems. \textcolor{blue}{[This work has been published as IEEE Control Systems Letters 9, pp. 1009-1014 (2025)]}
\end{abstract}

\section{Introduction}

Quantum Markov models are based on the scenario of a quantum system of interest, referred to herein as the {\em principal (quantum) system}, being coupled to one or more quantum white noise processes as its environment  \cite{HP84,GZ04,BvHJ07,WM10}.  They have been ubiquitously employed as accurate models for various physical systems, including  quantum optical, optomechanical and superconducting systems, see, e.g,. \cite{GZ04,WM10,NY17}, as well as a model for quantum noise in quantum computers \cite[Chapter 8]{NC10}. However, they are not suitable for modeling all physical systems, including environments that retain a memory of past states of the principal system, for which non-Markovian models are required. Non-Markovian quantum systems also arise in quantum technologies, for example in systems driven by a quantum field in various types of non-classical non-Gaussian states  \cite{GJN11,GJNC12,GJN13,GJN14}.  

Quantum Markov models also underpin the theory for continuously-monitored quantum systems, common  in quantum optics \cite{BvHJ07,WM10} and superconducting quantum systems \cite{Clerk10}. In this case, the principal system is coupled to a traveling probe that is modeled as a quantum white noise. The evolution of the quantum system under this continuous-measurement is described by a quantum filtering equation \cite{BvHJ07}, also commonly referred to in the physics literature as a stochastic master equation \cite{WM10}. 

One approach to modeling non-Markovian systems is to embed the principal system in a larger quantum system. Often the larger system is taken to be a Markovian quantum system that includes another quantum system, which will be referred to herein as an {\em auxiliary (quantum) system}, and quantum white noises as a bath for the principal and auxiliary. The principal and auxiliary can be coupled to one another and both can be coupled to the quantum white noises. The principal and auxiliary thus form a Markovian quantum system while the auxiliary and the quantum white noises form what is referred to as a compound noise source in \cite{Nurd23}. Examples include taking the auxiliary to be a system of quantum harmonic oscillators and the quantum white noises to be bosonic \cite{Imamoglu94,DBG01,Mascherpa20}. This approach has also been adopted to the case where the bath consists of fermionic quantum white noises, see, e.g., \cite{CAG19}. Another type of Markovian embedding  takes as the compound noise the continuous-mode output of a quantum input-output system \cite{GZ04,CKS17} that is driven by quantum white noise fields. The principal system is another quantum input-output system that is coupled to this compound noise via a cascade connection in which the output from the compound noise drives the input of the principal system \cite{GJN11,GJNC12,GJN13,GJN14}. 

The quantum filtering equation has a dual use for numerically computing the unconditional state of a quantum Markov system as the solution to the  Gorini-Kossakowski-Sudarshan-Lindblad (GKSL) quantum master equation, via Monte Carlo simulations; see \cite{HC93} and the references therein. It is based on simulating  stochastic pure states that solve a stochastic Schr\"{o}dinger equation (SSE) driven by a Wiener or Poisson process.  The  noise driving the SSE has a physical interpretation of being the back-action noise due to continuous measurement. A non-Markovian version of the SSE called non-Markovian quantum state diffusion (NMQSD)  has also been proposed \cite{DGS98}. It is is based on stochastic pure states that solve a non-Markovian SDE driven by a complex Gaussian white noise. The NMQSD can be used to numerically estimate, also via Monte Carlo simulations, the solution of a non-Markovian quantum master equation that gives the reduced (unconditional) state of a class of non-Markovian quantum systems.  However, unlike the Markovian SSEs, the quantum state diffusion equation in NMQSD is not known to be associated to some continuous-measurement process on the non-Markovian quantum system, thus lacking a physical interpretation \cite{KS12,GDA25}. 

In \cite{Nurd23} it was shown that the conditional  evolution of the principal system in a Markovian embedding under continuous monitoring by a travelling quantum probe can be expressed as a system of  coupled stochastic differential equations (SDEs) that involve only operators on the principal system. Likewise, the unconditional evolution of the principal system is given by a system of coupled ordinary differential equations (ODEs) involving only operators on the principal system.
The reduced conditional  state  (conditioned on the measurement outcomes) and unconditional state of the principal only are determined only by the ``diagonal" blocks of this coupled system  of SDEs and ODEs, respectively.
In this paper it is shown that ``off-diagonal" blocks of the system of SDEs can be exactly eliminated up to their initial conditions, leaving only a closed system of SDEs  for the diagonal blocks. Under additional conditions the  off-diagonal initial conditions can be made to vanish. This reduced closed system of equations, which includes an integration term with a two-time stochastic kernel, represents the non-Markovian stochastic dynamics of the principal system under continuous measurement. These equations determine  the reduced conditional state of the principal only and can be viewed as  a stochastic Nakajima-Zwanzig type of equation for continuously measured quantum systems.

\section{Preliminaries}
\label{sec:background}
\textbf{Notation.} $X^{\top}$ denotes the transpose of a matrix $X$, $X^{\dag}$ denotes the adjoint of a Hilbert space operator $X$ (the conjugate transpose when $X$ is a matrix). $I_n$ will denote an $n \times n$ identity matrix and $I$ can denote either an identity matrix (whose dimension can be inferred from the context), an identity map or an identity operator. $\mathrm{Tr}$ denotes the trace of a matrix or an operator. A signal (function of time) will be denoted by $V_{\cdot}$ where the subscript $\cdot$ is a placeholder for time. If a signal is clear from its context then it will be denoted simply as $V$ (without the subscript). For a signal $Y_{\cdot}$, $Y_{0:t}=\{Y_{\tau}\}_{0 \leq \tau \leq t}$. If $\mathfrak{h}_1$ and $\mathfrak{h}_2$ are Hilbert spaces, $\mathscr{L}(\mathfrak{h}_1;\mathfrak{h}_2)$ denotes the class of all linear operators mapping from $\mathfrak{h}_1$ to $\mathfrak{h}_2$. If $\mathfrak{h}_1 =\mathfrak{h} =\mathfrak{h}_2$ then it is written simply as  $\mathscr{L}(\mathfrak{h})$. If $X,Y \in \mathscr{L}(\mathfrak{h})$ then $[X,Y]=XY-YX$ and $\{X,Y\} = XY + YX$.  If $X$ is an operator on the composite Hilbert space $\mathfrak{h}_1 \otimes \mathfrak{h}_2$ then $\mathrm{Tr}_{\mathfrak{h}_j}(X)$ denotes the partial trace of $X$ by tracing out over the Hilbert space $\mathfrak{h}_j$ ($j=1,2$). If $O_j \in \mathscr{L}(\mathfrak{h}_j)$ then $O_j$ is also used as a shorthand  for the ampliation of $O_j$ to the composite Hilbert space $\mathfrak{h}_1 \otimes \mathfrak{h}_2$. Also, $\delta_{jk}$ is  the Kronecker delta and $\mathbb{E}[\cdot]$ denotes the classical expectation operator.

The set up of \cite{Nurd23} will now be revisited. Without  loss of generality, the multiple auxiliaries in \cite{Nurd23} will be combined into  a single auxiliary on a Hilbert space $\mathfrak{h}_{\rm a}$ that is finite dimensional with  $\mathrm{dim}(\mathfrak{h}_{\rm a})=n_{\rm a}$, and let $\mathfrak{h}_{\rm sa}=\mathfrak{h}_{\rm s} \otimes \mathfrak{h}_{\rm a}$ be the composite Hilbert space of the principal ($\mathfrak{h}_{\rm s}$) and auxiliary. The principal and auxiliary are coupled through $K \geq 1$ external quantum white noise fields, taken to be in the vacuum state, through the (generally time-dependent) coupling (or jump) operators $L_k(t)$, $k=1,\ldots,K$, where $L_k(t)$ is the coupling operator at time $t$ to the $k$-th quantum white noise. The coupling operators take the general form
$L_k(t) = L_{k,{\rm s}}(t) + L_{k,{\rm sa}}(t)  + L_{k,{\rm a}}(t)$, where $L_{k,{\rm s}}(t)$ is the ampliation of an operator that acts only on the principal system,  $L_{k,{\rm sa}}(t)$ acts on the principal and auxiliary, and $L_{k,{\rm a}}(t)$ is the ampliation of an operator that acts only on the auxiliary. Similarly the principal and auxiliary can also couple through a Hamiltonian $H(t)$ that is of the form  
$H(t) = H_{\rm s}(t) + H_{\rm sa}(t)  + H_{\rm a}(t)$, where, as with the coupling operator, $H_{\rm s}(t)$ is the ampliation of a Hamiltonian  that acts only on the principal system,  $H_{\rm sa}(t)$ is a Hamiltonian that acts on the principal and auxiliary, and $H_{\rm a}(t)$ is the ampliation of a Hamiltonian that acts only on the auxiliary. 

The principal system is measured by coupling it to a probe quantum white noise field, that is indicated by the index 0. The coupling to the probe is via a coupling operator $L_0(t) \in \mathscr{L}(\mathfrak{h}_{\rm s})$ for all $t$. This paper will consider the case where the probe is continuously measured via homodyne detection of the probe amplitude quadrature. However, the derivations and results herein can be straightforwardly modified for the case of continuous measurement of the phase quadrature of the probe and continuous photon counting measurement \cite{BvHJ07}. 

Let $\varrho_{{\rm sa},\cdot}$ be the conditional joint density operator of the principal and auxiliary. Then under continuous measurement of the amplitude quadrature the measurement signal $Y^Q_{\cdot}$ satisfies:
$$
dY^Q_t = \mathrm{Tr}(\varrho_{{\rm sa},t} (L_0(t)+L_0(t)^{\dag})) dt + dI_t,\; Y^Q_0=0,
$$
where $I_{\cdot}$ is a standard Wiener process, the so-called measurement shot noise.  The evolution of $\varrho_{{\rm sa},\cdot}$ is given by the operator-valued stochastic differential equation (SDE):
\begin{align}
d\varrho_{{\rm sa},t} &= \left(-i[H(t),\varrho_{\rm{sa},t}]+ \sum_{k=0}^K \mathcal{D}_{k}(t) \varrho_{{\rm sa},t} \right)dt + \left(\vphantom{L_0^{\dag}} L_0(t) \varrho_{{\rm sa},t} \right. \notag \\
&\quad + \left. \varrho_{{\rm sa},t} L_0^{\dag}(t)- \varrho_{{\rm sa},t} \mathrm{Tr}((L_0(t) + L_0^{\dag}(t)) \varrho_{{\rm sa},t})\right) dI_t, \label{eq:sa-SME}
\end{align}
where
$$
\mathcal{D}_{k}(t) \varrho_{{\rm sa},t} = L_k(t) \varrho_{{\rm sa},t} L_k(t)^{\dag} -\frac{1}{2}\left\{ L_k(t)^{\dag}L_k(t),\varrho_{{\rm sa},t} \right\} 
$$
Conditions for existence and uniqueness of a solution to \eqref{eq:sa-SME} can be found in, e.g., \cite[\S 5.1]{BG09}.  In particular, under \cite[Assumption 5.1]{BG09} \eqref{eq:sa-SME}  admits a pathwise unique continuous solution with $\varrho_{{\rm sa},t}$ a random density operator for each $t$, and uniqueness in law holds \cite[Theorem 5.6]{BG09}.

The unconditional density operator $\rho_{{\rm sa},\cdot} = \mathbb{E}\left[ \varrho_{{\rm sa},\cdot}\right]$ satisfies the GKSL quantum master equation:
\begin{align}
\dot{\rho}_{{\rm sa},t} &= -i[H(t),\rho_{\rm{sa},t}]+ \sum_{k=0}^K \mathcal{D}_{k}(t) \rho_{{\rm sa},t}. \label{eq:amplitude-meas}
\end{align}

Let $\{| \phi_j \rangle \}_{j=1,\ldots,n_{\rm a}}$ be an orthonormal basis for the auxiliary Hilbert space $\mathfrak{h}_{\rm a}$ and for any linear operator $X$ on $\mathfrak{h}_{\rm sa}$ define  $X^{jk} = \langle \phi_j| X |\phi_k\rangle  =(I \otimes \langle \phi_j  |) X ( I \otimes  |\phi_k \rangle)$, where $I$ is the identity operator on $\mathfrak{h}_{\rm s}$ and the middle term is a common shorthand  in the physics literature for  the last term.  Let $\varrho_{{\rm s},\cdot} = \mathrm{Tr}_{\mathfrak{h}_{\rm a}}(\varrho_{\rm sa})$ be the reduced conditional state of the principal system under continuous measurement of $Y^{Q}_{\cdot}$. Define $\varrho^{jk}_{{\rm s},\cdot} = \langle \phi_j | \varrho_{\rm sa} |\phi_k \rangle$ for $j,k=1,\ldots,n_{\rm a}$. Note that for each $j,k$, $\varrho^{jk}_{{\rm s},\cdot}$ is by definition an operator on the principal system. Following \cite{Nurd23}, $\varrho_{{\rm s},\cdot} $ can be computed from $\{\varrho^{kk}_{{\rm s},\cdot} \}_{k=1,\ldots,n_{\rm a}}$ as $\varrho_{{\rm s},\cdot}  = \sum_{k=1}^{n_{\rm a}} \varrho^{kk}_{{\rm s},\cdot}$. It was shown that $\varrho^{jk}_{{\rm s},\cdot}$ for $j,k=1,\ldots,n_{\rm a}$ satisfy a system of coupled SDEs that involves only principal system operators. Similarly, defining $\rho^{jk}_{{\rm s},\cdot}$ to be $\rho^{jk}_{{\rm s},\cdot} = \langle \phi_j | \rho_{\rm sa} |\phi_k \rangle$  then $\{\rho^{jk}_{{\rm s},\cdot}\}_{j,k=1,\ldots,n_{\rm a}}$ satisfy a system of coupled ODEs with the reduced unconditional state for the principal system $\rho_{{\rm s},\cdot}= \mathrm{Tr}_{\mathfrak{h}_{\rm a}}(\rho_{\rm sa})$ given by  $\rho_{{\rm s},\cdot}  = \sum_{k=1}^{n_{\rm a}} \rho^{kk}_{{\rm s},\cdot}$. 

\section{Main results}
\label{sec:results}

Let $\mathcal{P}:\mathscr{L}(\mathfrak{h}_{\rm sa}) \rightarrow \mathscr{L}(\mathfrak{h}_{\rm sa})$ be a projection superoperator that maps a density operator $\rho$ on $\mathfrak{h}_{\rm sa}$ to another density operator  $\rho'$ on $\mathfrak{h}_{\rm sa} $ ($\mathcal{P}\rho=\rho'$) such that $\mathcal{P}^2=\mathcal{P}$. It is required to satisfy the properties:
\begin{enumerate}
\item $\mathrm{Tr}_{\mathfrak{h}_{\rm a}}(\mathcal{P} \rho) = \mathrm{Tr}_{\mathfrak{h}_{\rm a}}(\rho)$.

\item $\mathcal{P} ((O \otimes  I_{\mathfrak{h}_{\rm a}}) \rho) = O  \mathcal{P} \rho$ for any operators $O$ on $\mathfrak{h}_{\rm s}$ and $\rho$ on $\mathfrak{h}_{\rm sa}$. 
 \end{enumerate}

Let $\mathcal{Q}= \mathcal{I}-\mathcal{P}$, where $\mathcal{I}$ is the identity supeoperator on $\mathscr{L}(\mathfrak{h}_{\rm sa})$. For any operator $Z \in \mathscr{L}(\mathfrak{h}_{\rm sa})$ define $Z^p=\mathcal{P}Z$ and  $Z^q=\mathcal{Q}Z$. Also, for any linear superoperator $\mathcal{X}:  \mathscr{L}(\mathfrak{h}_{\rm sa}) \rightarrow  \mathscr{L}(\mathfrak{h}_{\rm sa})$, define  $\mathcal{X}^{pp} = \mathcal{P} \mathcal{X} \mathcal{P}$,  $\mathcal{X}^{pq} = \mathcal{P} \mathcal{X} \mathcal{Q}$, $\mathcal{X}^{qp} = \mathcal{Q} \mathcal{X} \mathcal{P}$, and  $\mathcal{X}^{qq} = \mathcal{Q} \mathcal{X} \mathcal{Q}$.  Let $\mathcal{L}(t)$  be the Lindblad generator,  
\begin{align*}
\mathcal{L}(t) \varrho_{\rm{sa},t}  &= -i[H(t),\varrho_{\rm{sa},t}]+ \sum_{k=0}^K \mathcal{D}_{k}(t)   \varrho_{{\rm sa},t}, 
\end{align*}
and $\mathcal{G}(t)$ be the superoperator defined by:
$$
\mathcal{G}(t) \varrho_{\rm{sa},t} =  L_0(t) \varrho_{\rm{sa},t} +  \varrho_{\rm{sa},t}  L_0(t)^{\dag}.
$$
Note that by the assumptions on $\mathcal{P}$, $\mathcal{G}(t)$ commutes with both $\mathcal{P}$ and $\mathcal{Q}$ so that $\mathcal{P}\mathcal{G}(t) \varrho_{{\rm sa},t} = \mathcal{G}(t) \mathcal{P}\varrho_{{\rm sa},t}$ and $\mathcal{Q}\mathcal{G}(t) \varrho_{{\rm sa},t} = \mathcal{G}(t) \mathcal{Q}\varrho_{{\rm sa},t}$. 
Also,  since $L_0(t) \in \mathscr{L}(\mathfrak{h}_{\rm s})$, we have that
\begin{align*}
\lefteqn{\mathrm{Tr}((L_0(t) + L_0(t)^{\dag}) \rho_{{\rm sa},t})}\\ 
&=  \mathrm{Tr}((L_0(t) + L_0(t)^{\dag}) \mathrm{Tr}_{\mathfrak{h}_{\rm a}} (\rho_{{\rm sa},t}) )\\
&= \mathrm{Tr}((L_0(t) + L_0(t)^{\dag}) \mathcal{P} \rho_{{\rm sa},t}). 
\end{align*}
From \eqref{eq:sa-SME} and these properties of $\mathcal{P}$, it follows that:  
\begin{align}
d\varrho^{p}_{{\rm sa},t}&=(\mathcal{L}(t)^{pp} \varrho^{p}_{{\rm sa},t}+ \mathcal{L}(t)^{pq}  \varrho^{q}_{{\rm sa},t})dt \notag \\
&\quad + (\mathcal{G}(t) \varrho^{p}_{{\rm sa},t} - \varrho^{p}_{{\rm sa},t} \mathrm{Tr}((L_0(t) + L_0(t)^{\dag}) \varrho^p_{{\rm sa},t}) )dI_t \label{eq:SDE-p}\\
d\varrho^{q}_{{\rm sa},t}&=(\mathcal{L}(t)^{qp} \varrho^{p}_{{\rm sa},t}+ \mathcal{L}(t)^{qq}  \varrho^{q}_{{\rm sa},t} )dt\notag \\
&\quad + (\mathcal{G}(t) \varrho^{q}_{{\rm sa},t} - \varrho^{q}_{{\rm sa},t} \mathrm{Tr}((L_0(t) + L_0(t)^{\dag}) \varrho^p_{{\rm sa},t}) )dI_t. \label{eq:SDE-q}
\end{align}

The SDE for $\varrho^{q}_{{\rm sa},\cdot}$ is linear for a given  $\varrho^{p}_{{\rm sa},\cdot}$ and can be rewritten as:
\begin{align}
d\varrho^{q}_{{\rm sa},t}&=d\mathcal{A}_{ \varrho^p_{{\rm sa},t}}(t) \varrho^{q}_{{\rm sa},t}  + \mathcal{L}(t)^{qp}  \varrho^{p}_{{\rm sa},t}dt \label{eq:SDE-q-2}
\end{align}
where $\mathcal{A}_{ \varrho^p_{{\rm sa},t}}(t)$ is a stochastic generator given by: 
\begin{eqnarray*}
d\mathcal{A}_{ \varrho^p_{{\rm sa},t}}(t)  &=&  \mathcal{L}(t)^{qq}dt  \\
&\quad&  +( \mathcal{G}(t)-  \mathrm{Tr}((L_0(t) + L_0(t)^{\dag}) \varrho^p_{{\rm sa},t})  \mathcal{I}) dI_t, 
\end{eqnarray*}
and $\mathcal{I}$ is the identity operator on $\mathscr{L}(\mathfrak{h}_{\rm sa})$ as before. Set $\mathcal{A}_{ \varrho^p_{{\rm sa},t_0}}=0$ at an initial time $t_0$. 

Note that since $\mathfrak{h}_{\rm sa}$ is finite dimensional, all linear operators in $\mathscr{L}(\mathfrak{h}_{\rm sa})$ can be represented by finite-dimensional vectors and all superoperators acting on $\mathscr{L}(\mathfrak{h}_{\rm sa})$ can be represented as matrices.  Moreover, the representations can be chosen to be real by separating the real and imaginary parts. This allows the application of results for time-varying matrix-valued linear SDEs with additive and multiplicative Wiener (more generally semimartingale) noise (in, e.g., \cite{DY08}) in the present setting by identifying  operators and superoperators with their vector and matrix representations, respectively.

Let $\Phi_t$ be a superoperator on $\mathscr{L}(\mathfrak{h}_{\rm sa})$ that is the solution to the SDE:
\begin{align}
d\Phi_{t,t_0} = d\mathcal{A}_{ \varrho^p_{{\rm sa},t}}(t) \Phi_{t,t_0},   \label{eq:SDE-Phi}
\end{align}
with the initial condition $\Phi_{t_0,t_0} =\mathcal{I}$. Under the assumption that $\mathcal{A}_{ \varrho^p_{{\rm sa},\cdot}}(\cdot)$ is a matrix-valued semimartingale, the unique solution $\Phi_{t,t_0}$ is called the stochastic exponential of $\mathcal{A}_{ \varrho^p_{{\rm sa},\cdot}}(\cdot)$, which is invertible for each $t$ \cite{DY08}. Then the following holds:

\begin{lemma}
\label{lem:sol-varrho-q}
Suppose that \eqref{eq:sa-SME} has a unique solution that is continuous w.r.t. $t$ and adapted to the filtration generated by the Wiener process $I_{\cdot}$.  The solution to \eqref{eq:SDE-q-2} is
$$
\varrho^{q}_{{\rm sa},t} = \Phi_{t,t_0} \varrho^{q}_{{\rm sa},t_0}  +  \Phi_{t,t_0} \int_{t_0}^t \Phi_{t',t_0}^{-1} \mathcal{L}(t')^{qp}  \varrho^{p}_{{\rm sa},t'}dt'
$$
\end{lemma}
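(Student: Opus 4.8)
The plan is to apply the variation-of-parameters (integrating-factor) method for linear matrix-valued SDEs. Since $\mathfrak{h}_{\rm sa}$ is finite dimensional and, as noted above, operators and superoperators admit real finite-dimensional vector and matrix representations, \eqref{eq:SDE-q-2} is a linear It\^o SDE with multiplicative and additive semimartingale noise, and the stochastic exponential $\Phi_{t,t_0}$ defined by \eqref{eq:SDE-Phi} exists, is unique, and is invertible for each $t$ \cite{DY08}. Writing $d\mathcal{A}_{\varrho^p_{{\rm sa},t}}(t) = \mathcal{L}(t)^{qq}\,dt + \mathcal{C}(t)\,dI_t$ with diffusion superoperator $\mathcal{C}(t) = \mathcal{G}(t) - \mathrm{Tr}((L_0(t)+L_0(t)^{\dag})\varrho^p_{{\rm sa},t})\mathcal{I}$, the goal is to show that the candidate $Z_t := \Phi_{t,t_0}^{-1}\varrho^q_{{\rm sa},t}$ satisfies $dZ_t = \Phi_{t,t_0}^{-1}\mathcal{L}(t)^{qp}\varrho^p_{{\rm sa},t}\,dt$; integrating this and left-multiplying by $\Phi_{t,t_0}$ then yields the asserted formula.

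The central step is to compute the It\^o differential of the inverse stochastic exponential. Differentiating the identity $\Phi_{t,t_0}\Phi_{t,t_0}^{-1}=\mathcal{I}$ with the It\^o product rule and using $d\Phi_{t,t_0}=d\mathcal{A}_{\varrho^p_{{\rm sa},t}}(t)\,\Phi_{t,t_0}$ gives
$$
d(\Phi_{t,t_0}^{-1}) = -\Phi_{t,t_0}^{-1}\,d\mathcal{A}_{\varrho^p_{{\rm sa},t}}(t) + \Phi_{t,t_0}^{-1}\mathcal{C}(t)^2\,dt,
$$
where the last term is the It\^o correction arising from the quadratic variation of the $dI_t$ part. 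I would verify this by substituting it back and checking that the term $(d\Phi_{t,t_0})\Phi_{t,t_0}^{-1}$, the term $\Phi_{t,t_0}\,d(\Phi_{t,t_0}^{-1})$, and the cross-variation $(d\Phi_{t,t_0})(d\Phi_{t,t_0}^{-1}) = -\mathcal{C}(t)^2\,dt$ sum to zero.

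With this in hand I would apply the It\^o product rule to $Z_t=\Phi_{t,t_0}^{-1}\varrho^q_{{\rm sa},t}$, using \eqref{eq:SDE-q-2} for $d\varrho^q_{{\rm sa},t}$. Three cancellations occur: the two multiplicative terms $\pm\,\Phi_{t,t_0}^{-1}\,d\mathcal{A}_{\varrho^p_{{\rm sa},t}}(t)\,\varrho^q_{{\rm sa},t}$ cancel, and the It\^o correction $+\Phi_{t,t_0}^{-1}\mathcal{C}(t)^2\varrho^q_{{\rm sa},t}\,dt$ cancels against the cross-variation $(d\Phi_{t,t_0}^{-1})(d\varrho^q_{{\rm sa},t}) = -\Phi_{t,t_0}^{-1}\mathcal{C}(t)^2\varrho^q_{{\rm sa},t}\,dt$. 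What survives is precisely $dZ_t=\Phi_{t,t_0}^{-1}\mathcal{L}(t)^{qp}\varrho^p_{{\rm sa},t}\,dt$. Integrating from $t_0$ to $t$, using $\Phi_{t_0,t_0}=\mathcal{I}$, and left-multiplying by $\Phi_{t,t_0}$ gives the claim.

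The main obstacle is the correct bookkeeping of the It\^o second-order terms: getting the quadratic-variation correction in $d(\Phi_{t,t_0}^{-1})$ right (including the operator ordering, since all factors are superoperators/matrices and need not commute) and confirming that this correction exactly cancels the cross-variation term produced by the product rule. The remaining points---existence, uniqueness and invertibility of $\Phi_{t,t_0}$, and the adaptedness and continuity needed to justify the integrals---are supplied by the finite-dimensional matrix SDE theory of \cite{DY08} under the standing assumption that $\mathcal{A}_{\varrho^p_{{\rm sa},\cdot}}(\cdot)$ is a semimartingale, together with the hypothesis that \eqref{eq:sa-SME} admits a unique continuous adapted solution.
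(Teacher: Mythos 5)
Your proof is correct, but it takes a different route from the paper. The paper does not redo the It\^o calculus: it simply invokes the general variation-of-constants formula of \cite[Theorem 1.2]{DY08} for linear semimartingale-driven SDEs, identifying the forcing process $H(t)=\varrho^{q}_{{\rm sa},t_0}+\int_{t_0}^t \mathcal{L}(t')^{qp}\varrho^{p}_{{\rm sa},t'}dt'$ and the multiplicative driver $L(t)=\mathcal{A}_{\varrho^p_{{\rm sa},t}}(t)$, and then observes that $H$ has finite variation (its continuous-martingale part is constant), so the quadratic-covariation correction term $G(t)$ appearing in that theorem reduces to $H(t)$ itself. You instead re-derive the formula from scratch: your expression $d(\Phi_{t,t_0}^{-1})=-\Phi_{t,t_0}^{-1}\,d\mathcal{A}_{\varrho^p_{{\rm sa},t}}(t)+\Phi_{t,t_0}^{-1}\mathcal{C}(t)^2\,dt$ is the correct inverse of the left stochastic exponential (the operator ordering checks out against $\Phi\Phi^{-1}=\mathcal{I}$), and the three cancellations you list in $d(\Phi_{t,t_0}^{-1}\varrho^q_{{\rm sa},t})$ do occur exactly as claimed, leaving only $\Phi_{t,t_0}^{-1}\mathcal{L}(t)^{qp}\varrho^{p}_{{\rm sa},t}\,dt$. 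The two arguments hinge on the same structural fact---the inhomogeneous term $\mathcal{L}(t)^{qp}\varrho^{p}_{{\rm sa},t}\,dt$ carries no $dI_t$ component---which in the paper's citation appears as $\langle H^c,L^c\rangle=0$ and in your computation as the absence of any surviving cross-variation between the forcing and $\Phi^{-1}$. The paper's approach is shorter and delegates the analytic hypotheses to \cite{DY08}; yours is self-contained and makes the mechanism of the It\^o corrections explicit, at the cost of having to justify separately that $\Phi^{-1}$ is a semimartingale to which the product rule applies (which you correctly outsource to the invertibility statement in \cite{DY08}). Either is acceptable.
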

\begin{proof}
The result follows from \cite[Theorem 1.2]{DY08} by making the identification $H(t)=\varrho^{q}_{{\rm sa},t_0} + \int_{t_0}^t \mathcal{L}(t')^{qp}  \varrho^{p}_{{\rm sa},t'}dt'$ and $L(t)=\mathcal{A}_{ \varrho^p_{{\rm sa},t}}(t)$ (in this proof $H(t)$ and $L(t)$ refer to the processes as defined in \cite[Theorem 1.2]{DY08}). Since $H(t)$ and $L(t)$ are continuous processes, following \cite{DY08} let $H^c(t)$ and $L^c(t)$ be their continuous martingale part, respectively.  By the assumption of the lemma, $H$ is by definition a process with finite variation on each interval $[0,t]$, since $\mathcal{L}(t)^{qp}  \varrho^{p}_{{\rm sa},t}$ is bounded over any such interval for every sample path. Therefore, $H^c$ is a constant function and the quadratic covariation $\langle H^c(t),L^c(t) \rangle$ vanishes. Hence the term $G(t)$ in \cite[Theorem 1.2]{DY08} is simply $G(t)=H(t)$, from which the statement of the lemma follows. 
\end{proof}

\begin{theorem}
\label{thm:sol-varrho-p}
Under the assumptions of Lemma \ref{lem:sol-varrho-q}, the matrix-valued stochastic process $\varrho^{p}_{{\rm sa},\cdot}$ satisfies the SDE
\begin{align}
d\varrho^{p}_{{\rm sa},t} &=\left(\vphantom{\int_0^t}\mathcal{L}(t)^{pp} \varrho^{p}_{{\rm sa},t} \right. \notag \\
&\quad \left. + \mathcal{L}(t)^{pq} \Phi_{t,t_0} \varrho^{q}_{{\rm sa},t_0} + \int_{t_0}^t \mathcal{K}_{\rm s}(t,t')  \varrho^{p}_{{\rm sa},t'}dt' \right)dt \notag \\
&\quad + (\mathcal{G}(t) \varrho^{p}_{{\rm sa},t} - \varrho^{p}_{{\rm sa},t} \mathrm{Tr}((L_0(t) + L_0(t)^{\dag}) \varrho^p_{{\rm sa},t}) )dI_t \label{eq:reduced-SDE-p}
\end{align}
where $\Phi_{t,t_0}$ is the stochastic exponential solving the SDE \eqref{eq:SDE-Phi} and $\mathcal{K}_{\rm s}(t,t') = \mathcal{L}(t)^{pq} \Phi_{t,t_0} \Phi_{t',t_0}^{-1} \mathcal{L}(t')^{qp}$ is a two-time stochastic kernel. The conditional state $\varrho_{{\rm s},\cdot}$ is given by $\varrho_{{\rm s},\cdot} = \mathrm{Tr}_{\mathfrak{h}_{\rm a}}  (\varrho^p_{{\rm sa},\cdot})$ and the unconditional state by $\rho_{{\rm s},\cdot} = \mathbb{E}\left[\varrho_{{\rm s},\cdot}\right]$. 
\end{theorem}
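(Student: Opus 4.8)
The plan is to carry out the Nakajima--Zwanzig elimination of the off-diagonal ($\mathcal{Q}$) sector by substituting the closed-form expression for $\varrho^q_{{\rm sa},t}$ furnished by Lemma \ref{lem:sol-varrho-q} directly into the drift of the SDE \eqref{eq:SDE-p} governing $\varrho^p_{{\rm sa},t}$. The only place in \eqref{eq:SDE-p} where $\varrho^q_{{\rm sa},t}$ enters is through the coupling term $\mathcal{L}(t)^{pq}\varrho^q_{{\rm sa},t}$ in the drift; crucially, the diffusion (noise) term of \eqref{eq:SDE-p} involves $\varrho^p_{{\rm sa},t}$ alone and is therefore left untouched by the substitution. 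Hence the argument has three steps: (i) insert the Lemma's expression for $\varrho^q_{{\rm sa},t}$; (ii) recognise the memory kernel; and (iii) verify the two state-recovery identities.

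For step (i), expanding $\mathcal{L}(t)^{pq}\varrho^q_{{\rm sa},t}$ with Lemma \ref{lem:sol-varrho-q} yields two contributions. The first, $\mathcal{L}(t)^{pq}\Phi_{t,t_0}\varrho^q_{{\rm sa},t_0}$, is the inhomogeneous term carrying the off-diagonal initial condition and appears verbatim in \eqref{eq:reduced-SDE-p}. The second is
\[
\mathcal{L}(t)^{pq}\Phi_{t,t_0}\int_{t_0}^t \Phi_{t',t_0}^{-1}\mathcal{L}(t')^{qp}\varrho^p_{{\rm sa},t'}\,dt'.
\]
For step (ii), I would pull the superoperator $\mathcal{L}(t)^{pq}\Phi_{t,t_0}$, which for fixed $t$ is a single (random) element of $\mathscr{L}(\mathfrak{h}_{\rm sa})$ independent of $t'$, inside the $t'$-integral by linearity, producing exactly $\int_{t_0}^t \mathcal{K}_{\rm s}(t,t')\varrho^p_{{\rm sa},t'}\,dt'$ with the two-time stochastic kernel $\mathcal{K}_{\rm s}(t,t') = \mathcal{L}(t)^{pq}\Phi_{t,t_0}\Phi_{t',t_0}^{-1}\mathcal{L}(t')^{qp}$, as claimed. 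Collecting terms reproduces the drift of \eqref{eq:reduced-SDE-p}, and the unchanged diffusion term completes the SDE.

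For step (iii), the conditional-state identity $\varrho_{{\rm s},t} = \mathrm{Tr}_{\mathfrak{h}_{\rm a}}(\varrho^p_{{\rm sa},t})$ follows from property 1 of $\mathcal{P}$: since $\varrho^p_{{\rm sa},t}=\mathcal{P}\varrho_{{\rm sa},t}$, one has $\mathrm{Tr}_{\mathfrak{h}_{\rm a}}(\varrho^p_{{\rm sa},t}) = \mathrm{Tr}_{\mathfrak{h}_{\rm a}}(\mathcal{P}\varrho_{{\rm sa},t}) = \mathrm{Tr}_{\mathfrak{h}_{\rm a}}(\varrho_{{\rm sa},t}) = \varrho_{{\rm s},t}$. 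The unconditional-state identity $\rho_{{\rm s},t}=\mathbb{E}[\varrho_{{\rm s},t}]$ then follows from $\rho_{{\rm sa},t}=\mathbb{E}[\varrho_{{\rm sa},t}]$ together with the linearity of the partial trace and its commutativity with the classical expectation.

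The main obstacle I anticipate is measure-theoretic rather than algebraic: justifying that the interchange of the random superoperator $\mathcal{L}(t)^{pq}\Phi_{t,t_0}$ with the $t'$-integral is legitimate, and that the resulting closed equation \eqref{eq:reduced-SDE-p} is genuinely satisfied by the same $\varrho^p_{{\rm sa},\cdot}$ arising from the original coupled system \eqref{eq:SDE-p}--\eqref{eq:SDE-q-2}. Because the $t'$-integral in Lemma \ref{lem:sol-varrho-q} is a pathwise Lebesgue (finite-variation) integral rather than an It\^{o} integral---its integrand $\Phi_{t',t_0}^{-1}\mathcal{L}(t')^{qp}\varrho^p_{{\rm sa},t'}$ being continuous in $t'$ along each sample path---the interchange is valid path by path and introduces no It\^{o} correction. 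The adaptedness and continuity hypotheses inherited from Lemma \ref{lem:sol-varrho-q}, together with the invertibility of $\Phi_{t,t_0}$ from \cite{DY08}, ensure that the kernel $\mathcal{K}_{\rm s}(t,t')$ is $\mathcal{F}_t$-adapted and all objects are well defined, so that \eqref{eq:reduced-SDE-p} is a well-posed memory SDE and the substitution is rigorous.
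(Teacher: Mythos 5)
Your proposal is correct and follows exactly the paper's argument: the paper's proof is the single sentence that substituting the expression for $\varrho^{q}_{{\rm sa},t}$ from Lemma \ref{lem:sol-varrho-q} into the right-hand side of \eqref{eq:SDE-p} yields \eqref{eq:reduced-SDE-p}. Your additional remarks on the pathwise (finite-variation) nature of the $t'$-integral and the verification of the state-recovery identities are sound elaborations of details the paper leaves implicit.
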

\begin{proof}
Substituting the solution $\varrho^{q}_{{\rm sa},t}$ from Lemma \ref{lem:sol-varrho-q} to the right hand side of \eqref{eq:SDE-p} yields  the  SDE \eqref{eq:reduced-SDE-p}.  
\end{proof}

The intuition behind the theorem is as follows. Since the continuous-measurement is performed only on the principal  by coupling it to a measurement probe via a dissipation operator $L_0(\cdot)$ that is  a principal system operator, the stochastic measurement back-action term  in the $p$ component is decoupled from its $q$ component, as can be seen from \eqref{eq:SDE-p}. On the other hand, the same structure allows the effect of the $p$ component on the back-action term of the $q$ component to be isolated to the scalar term $\mathrm{Tr}((L_0(\cdot)+L_0(\cdot)^{\dag}) \varrho^p_{\mathrm{sa},\cdot})$. This enables the SDE for the $q$ component to be expressed as a linear SDE that is parameterized by $\varrho^p_{\mathrm{sa},\cdot}$ as given in \eqref{eq:SDE-q-2}. The linear SDE can then be solved by a stochastic version of the variation of constants formula \cite[Theorem 1.2]{DY08}.
 
A similar procedure can be followed to obtain a matrix-valued ODE for $\rho^{p}_{{\rm sa},\cdot}$ but this is a well-known procedure that leads to the so-called Nakajima-Zwanzig non-Markovian master equatiom  but here it is slightly generalized where the Liouvillian superoperator $[H(t),\cdot]$ in the standard Nakajima-Zwanzig formulation, see \cite[\S 3.2]{Bacchini11}, is replaced by the Lindbladian superoperator $\mathcal{L}(t)$. For the sake of completeness,  the Nakajima-Zwanzig equation for $\rho^{p}_{{\rm sa},\cdot}$  is given below,
\begin{align}
\dot{\rho}^{p}_{{\rm sa},t} &=  \mathcal{L}(t)^{pp} \rho^{p}_{{\rm sa},t} + \mathcal{L}(t)^{pq} \Gamma_{t,t_0}  \rho^{q}_{{\rm sa},t_0} \notag \\
&\quad + \int_{t_0}^t \mathcal{L}(t)^{pq} \Gamma_{t,t'} \mathcal{L}(t')^{qp} \rho^{p}_{{\rm sa},t'} dt' \label{eq:NZ}
\end{align}
where $\Gamma_{t,t_0}$ is the time-ordered exponential,
$\Gamma_{t,t_0} = \overleftarrow{T} \exp\left( \int_{t_0}^t \mathcal{L}^q(t') dt' \right)$ 
and  $\overleftarrow{T}$ is the chronological time ordering operator.  From the solution of the Nakajima-Zwanzig equation we obtain that $\rho_{{\rm s},\cdot} = \mathrm{Tr}_{\mathfrak{h}_{\rm a}}(\rho^p_{{\rm sa},\cdot})$. 

Note that the SDE (7) has a dependence on the initial condition $\varrho^{q}_{{\rm sa},t_0}$ similar to the Nakajima-Zwanzig equation \eqref{eq:NZ}. Under the standard initial product state assumption,  $\varrho_{{\rm sa},t_0} = \rho_{\rm s} \otimes \rho_{\rm a}$, $\mathcal{P}$ can be chosen such that $\varrho^q_{{\rm sa},t_0}=0$, e.g., $\mathcal{P} \rho=\mathrm{Tr}_{\mathfrak{h}_\mathrm{a}}(\rho)\otimes \rho_{\rm a}$. More generally, $\varrho^q_{{\rm sa},t_0}=0$ can hold under  a weaker condition than a product state (see the discussion after Theorem \ref{thm:SDE-NM-s}). Another way to remove the dependence on $\varrho^{q}_{{\rm sa},t_0}$ is if $\Phi_{t,t_0}$ is asymptotically stable in the sense that $\mathop{\lim}_{ t_0 \rightarrow -\infty} \Phi_{t,t_0}=0$ and decays sufficiently fast so that $\mathop{\lim}_{t_0 \rightarrow -\infty} \int_{t_0}^t \mathcal{K}_{\rm s}(t,t')  \varrho^{p}_{{\rm sa},t'}dt'$ exists for any $t$ and any $\{\varrho^{p}_{{\rm sa},t'},\;-\infty<t'<t \}$. Conditions for the asymptotic stability are beyond the scope of this work and have been studied elsewhere; see, e.g.,  \cite{Palamarchuk24} and the references therein. Then for $t_0 \rightarrow -\infty$ the following SDE for $\varrho^{p}_{{\rm sa},\cdot}$ is obtained that does not require knowing $\varrho^{q}_{{\rm sa},t_0}$,
\begin{align}
d\varrho^{p}_{{\rm sa},t} &=\left(\vphantom{\int_0^t}\mathcal{L}(t)^{pp} \varrho^{p}_{{\rm sa},t}  + \int_{-\infty}^t \mathcal{K}_{\rm s}(t,t') \varrho^{p}_{{\rm sa},t'}dt'  \right)dt \notag \\
&\quad + (\mathcal{G}(t) \varrho^{p}_{{\rm sa},t} - \varrho^{p}_{{\rm sa},t} \mathrm{Tr}((L_0(t) + L_0(t)^{\dag}) \varrho^p_{{\rm sa},t}) )dI_t. \label{eq:SDE-p-asymptotic}
\end{align}

The SDE presented in Theorem \ref{thm:sol-varrho-p} can be viewed as a stochastic version of the Nakajima-Zwanzig master equation for a non-Markovian quantum system under continuous measurement. As with the stochastic master equation and SSE for Markovian quantum systems, the unconditioned state  $\rho_{\mathrm{s},\cdot}$  can be computed through the relation $\rho_{\mathrm{s},\cdot} = \mathbb{E}[\mathrm{Tr}_{\mathfrak{h}_{\rm a}}(\varrho^{p}_{\mathrm{sa},\cdot})]$. However, unlike the SSE and NMQSD, which involve a state vector rather than a density matrix, the conditional matrix $\varrho^{p}_{\mathrm{sa},t}$ is of the same dimension as the unconditional matrix $\rho^{p}_{\mathrm{sa},t}$. Thus Monte Carlo simulation of $\varrho^{p}_{\mathrm{sa},\cdot}$ will not in general be a more computationally efficient method for computing the unconditional state $\rho_{\mathrm{s},\cdot}$ compared to directly solving the Nakajima-Zwanzig equation. On the other hand since the projection $\varrho^{p}_{\mathrm{sa},\cdot}$  lies on a subspace of a smaller dimension than $\varrho_{\mathrm{sa},\cdot}$,  \eqref{eq:reduced-SDE-p}  will be of interest for continuous-time quantum filtering and feedback control of non-Markovian quantum systems. In addition, unlike the pure state NMQSD, which can be used as a device for computing the unconditional reduced state $\rho_{\mathrm{s},t}$ at any time $t$ but is not in general associated with a continuous measurement process \cite{KS12,GDA25}, \eqref{eq:reduced-SDE-p}  describes the physical evolution of a non-Markovian system when it is continuously measured.

Recall the notation $X^{jk}$ introduced in \S \ref{sec:background}. The decomposition used in \cite{Nurd23} corresponds to the projection superoperator 
$\mathcal{P} X = \sum_{j=1}^{n_a} X^{jj} \otimes   | \phi_j \rangle  \langle \phi_j |$. It is directly verified  that $\mathrm{Tr}_{\mathfrak{h}_{\rm a}}(\mathcal{P}X)=\sum_{j=1}^{n_a} X^{jj} =\mathrm{Tr}_{\mathfrak{h}_{\rm a}}(X)$ and for any system operator $O$,
\begin{align*}
\mathcal{P}((O \otimes I_{\mathfrak{h}_{\rm a}})\rho)&= \sum_{j=1}^{n_a} (O\rho)^{jj} \otimes   | \phi_j \rangle  \langle \phi_j |\\
&= O \sum_{j=1}^{n_a} \rho^{jj} \otimes   | \phi_j \rangle  \langle \phi_j | =O  \mathcal{P}\rho,
\end{align*}
where the ampliation of $O$ is implied as appropriate per the notation used. However, in the approach of \cite{Nurd23} the calculations can be made explicit. Instead of evaluating the superoperators $\mathcal{L}(t)^{jk}$ for $j,k \in \{q,p\}$, $\Phi_{t,t_0}$, etc, one can work directly with block elements of the joint system and auxiliary density operators. Define $\varrho^{jk}_{{\rm s},\cdot} = \varrho^{jk}_{{\rm sa},\cdot}$ for all $j,k=1,\ldots,n_a$, where $\varrho^{jk}_{{\rm s},\cdot}$ is a principal system operator for each $j,k$.
Then $\varrho^{jk}_{{\rm s},\cdot}$ satisfies the coupled system of SDEs \cite{Nurd23}:
\begin{align}
d\varrho^{jk}_{{\rm s},t}  &=\left( i \sum_{l=1}^{n_a} (\varrho_{{\rm s},t}^{jl} H^{lk}(t) -H^{jl}(t) \varrho_{{\rm s},t}^{lk} ) \right. \notag \\
&\quad  + \sum_{m=0}^{K} \left( \sum_{r,s=1}^{n_a} L^{jr}_m(t)  \varrho^{rs}_{{\rm s},t} (L^{ks}_m(t))^{\dag} \right.  \notag\\
&\quad -\frac{1}{2}\sum_{r=1}^{n_a} \biggl((L_m^{\dag}(t) L_m(t))^{jr} \varrho^{rk}_{{\rm s},t}  \notag\\
&\qquad \left. \left. + \varrho^{jr}_{{\rm s},t} (L_m^{\dag}(t) L_m(t))^{rk}\biggr)\vphantom{\sum_k}\right) \right)dt \notag\\
&\qquad + \left( \vphantom{\sum_{l=1}^{n_a} }  L_0(t) \varrho^{jk}_{\mathrm{s},t}  + \varrho^{jk}_{\mathrm{s},t} L_0(t)^{\dag} \right. \notag\\
&\qquad \left. - \varrho^{jk}_{\mathrm{s},t} \sum_{l=1}^{n_a} \mathrm{Tr}\left((L_0(t)+L_0(t)^{\dag})\varrho^{ll}_{{\rm s},t}\right)\right) dI_t. \label{eq:coupled-SME}
\end{align}
In the above, $H^{jk}(t) = H_{\rm s}(t)\delta_{jk} +  H^{jk}_{\rm sa}(t)  + I\langle \phi_j| H_{\rm a}(t)| \phi_k \rangle$, $L_m^{jk}(t) = L_{m,{\rm s}}(t) \delta_{jk}  +  L^{jk}_{m,{\rm sa}}(t)+ I \langle \phi_j | L_{m,{\rm a}}(t) |\phi_k \rangle$, and $(L_m(t)^{\dag}L_m(t))^{jk}= \sum_{r=1}^{n_a} L_m(t)^{\dag jr} L_m(t)^{rk}$, where the $I$ in the last term for $H^{jk}(t)$ and $L_m^{jk}(t)$  is the identity matrix on the principal. On the other hand, the unconditional state $\rho^{jk}_{{\rm s},\cdot}$ satisfies the coupled system of ODEs:
\begin{align}
\dot{\rho}^{jk}_{{\rm s},t} &=  i \sum_{l=1}^{n_a} (\rho_{{\rm s},t}^{jl} H^{lk}(t) -H^{jl}(t) \rho_{{\rm s},t}^{lk} ) \notag\\
&\quad  + \sum_{m=0}^{K} \left( \sum_{r,s=1}^{n_a} L^{jr}_m(t)  \rho^{rs}_{{\rm s},t} (L^{ks}_m(t))^{\dag}  \right. \notag\\
&\qquad -\frac{1}{2}\sum_{r=1}^{n_a} \left( \vphantom{(L_m^{\dag}(t) L_m(t))^{rk}} (L_m^{\dag}(t) L_m(t))^{jr} \rho^{rk}_{{\rm s},t} \right. \notag \\
&\qquad \left.   + \rho^{jr}_{{\rm s},t} (L_m^{\dag}(t) L_m(t))^{rk})\vphantom{\sum_{k=1}}\right) \label{eq:coupled-ME}
\end{align}

Introduce the column vector of matrix-valued functions $\widetilde{\varrho}_{{\rm s},\cdot} =\left[ \begin{array}{cccc} \varrho^{11}_{{\rm s},\cdot} & \varrho^{22}_{{\rm s},\cdot} & \ldots & \varrho^{n_a n_a}_{{\rm s},\cdot} \end{array} \right]^{\dag} $ and define $\widetilde{\varrho}^0_{{\rm sc},\cdot}$ to be a vector of matrix-valued functions whose first $n_a-1$ rows are $\varrho^{12}_{{\rm s},\cdot}$, $\varrho^{13}_{{\rm s},\cdot}$, $\ldots$, $\varrho^{1 n_a}_{{\rm s},\cdot}$, followed in the next $n_a-2$ rows by $\varrho^{23}_{{\rm s},\cdot}$, $\ldots$, $\varrho^{2 n_a}_{{\rm s},\cdot}$, and so on until the last element $\rho^{(n_a-1) n_a}_{{\rm s},\cdot}$. Also, let $\widetilde{\rho}^{1}_{{\rm sc},\cdot}$ denote the vector of matrix-valued functions whose elements are adjoints of the corresponding elements of $\widetilde{\rho}^{0}_{\rm sc}$ (note that $\langle \phi_j | X | \phi_k\rangle^{\dag} = \langle \phi_k | X | \phi_j\rangle^{\dag}$ for any observable $X$). Then let 
$ \widetilde{\varrho}_{{\rm sc},\cdot} = [\begin{array}{cc} \widetilde{\varrho}_{{\rm sc},\cdot}^{0\top} & \widetilde{\varrho}_{{\rm sc},\cdot}^{1\top} \end{array}]^{\top}$. 
The coupled SDEs \eqref{eq:coupled-SME} can now be expressed for each $j,k$ with $j \neq k$ as:
\begin{align}
d\varrho^{jk}_{{\rm s},t} &=\mathcal{A}^{jk}_{00}(t)\widetilde{\varrho}_{{\rm s},t}dt + \left(\vphantom{\sum_{l=1}^{n_a}} \mathcal{A}^{jk}_{01}(t) \widetilde{\varrho}_{{\rm sc},t} dt   + dI_t \mathcal{B}^{jk}_{01,\widetilde{\varrho}_{{\rm s},t}}(t) \widetilde{\varrho}_{{\rm sc},t} \right) , \label{eq:coupled-SME-sc}
\end{align}
where $\mathcal{A}^{jk}_{00}(t)$, $j \neq k$, is a linear superoperator acting on $\widetilde{\varrho}_{{\rm s},t}$ as
\begin{align*}
\mathcal{A}^{jk}_{00}(t)\widetilde{\varrho}_{{\rm s},t}
&=  i (\varrho_{{\rm s},t}^{jj} H^{jk}(t) -H^{jk}(t) \varrho_{{\rm s},t}^{kk})  \notag \\
&\qquad  + \sum_{m=0}^{K} \left( \sum_{r=1}^{n_a} L^{jr}_m(t)  \varrho^{rr}_{{\rm s},t} (L^{kr}_m(t))^{\dag} \right.  \notag\\
&\qquad -\frac{1}{2}  \biggl( (L_m^{\dag}(t) L_m(t))^{jk} \varrho^{kk}_{{\rm s},t}  \notag\\
&\qquad \left. + \varrho^{jj}_{{\rm s},t} (L_m^{\dag}(t) L_m(t))^{jk}\biggr) \vphantom{\sum_k}\right), 
\end{align*}
$\mathcal{A}^{jk}_{01}(t)$, $j \neq k$, is a linear superoperator acting on $\widetilde{\varrho}_{{\rm sc},t}$ as
\begin{align*}
\lefteqn{\mathcal{A}^{jk}_{01}(t)\widetilde{\varrho}_{{\rm sc},t}}\\
&=  i \sum_{l=1}^{n_a} (\varrho_{{\rm s},t}^{jl} H^{lk}(t) (1-\delta_{lj}) -H^{jl}(t) \varrho_{{\rm s},t}^{lk} (1-\delta_{lk})) \notag \\
&\qquad  + \sum_{m=0}^{K} \left( \sum_{r,s=1}^{n_a} L^{jr}_m(t)  \varrho^{rs}_{{\rm s},t} (L^{ks}_m(t))^{\dag}(1-\delta_{rs}) \right.  \notag\\
&\qquad -\frac{1}{2}\sum_{r=1}^{n_a} \biggl(L_m^{\dag}(t) L_m(t))^{jr} \varrho^{rk}_{{\rm s},t} (1-\delta_{rk})  \notag\\
&\qquad \left. + \varrho^{jr}_{{\rm s},t} (L_m^{\dag}(t) L_m(t))^{rk}(1-\delta_{rj}) \biggr)  \vphantom{\sum_k}\right), 
\end{align*}
and $\mathcal{B}^{jk}_{01,\widetilde{\varrho}_{{\rm s},t}}(t)$, $j \neq k$, is a linear superoperator acting on $\widetilde{\varrho}_{{\rm sc},t}$ for each fixed $\widetilde{\varrho}_{{\rm s},t}$ as
\begin{align*}
\mathcal{B}^{jk}_{01,\widetilde{\varrho}_{{\rm s},t}}(t)\widetilde{\varrho}_{{\rm sc},t} &= L_0(t) \varrho^{jk}_{{\rm s},t} +  \varrho^{jk}_{{\rm s},t} L_0(t)^{\dag}  \notag\\
&\qquad - \varrho^{jk}_{{\rm s},t} \sum_{l=1}^{n_a} \mathrm{Tr}\left((L_0(t)+L_0(t)^{\dag})\varrho^{ll}_{{\rm s},t}\right).
\end{align*}

Similarly, for the case $j=k$, we can express 
\begin{align}
d\varrho^{jj}_{{\rm s},t} &=\mathcal{A}^{jj}_{11}(t)\widetilde{\varrho}_{{\rm sc},t} dt+ \left(\vphantom{\sum_{l=1}^{n_a}} \mathcal{A}^{jj}_{10}(t) dt  + dI_t \mathcal{B}^{jj}_{10}(t) \right)\widetilde{\varrho}_{{\rm s},t},  \notag
\end{align}
where $\mathcal{A}^{jj}_{11}(t)$ is a linear superoperator acting on $\widetilde{\varrho}_{{\rm sc},t}$ as:
\begin{align*}
\mathcal{A}^{jj}_{11}(t)\widetilde{\varrho}_{{\rm sc},t} &=  i \sum_{l=1}^{n_a} (\varrho_{{\rm s},t}^{jl} H^{lj}(t)  -H^{jl}(t) \varrho_{{\rm s},t}^{lj} )(1-\delta_{lj})  \notag \\
&\qquad  + \sum_{m=0}^{K} \left( \sum_{r,s=1}^{n_a} L^{jr}_m(t)  \varrho^{rs}_{{\rm s},t} (L^{js}_m(t))^{\dag}(1-\delta_{rs}) \right.  \notag\\
&\qquad -\frac{1}{2}\sum_{r=1}^{n_a} \left( (L_m^{\dag}(t) L_m(t))^{jr} \varrho^{rj}_{{\rm s},t} (1-\delta_{rj})\right.  \notag\\
&\qquad \left. \left. + \varrho^{jr}_{{\rm s},t} (L_m^{\dag}(t) L_m(t))^{rj} (1-\delta_{rj})\right) \vphantom{\sum_k}\right), 
\end{align*}
$\mathcal{A}^{jj}_{10}(t)$ is a linear superoperator acting on $\widetilde{\varrho}_{{\rm s},t}$ as:
\begin{align*}
\mathcal{A}^{jj}_{10}(t)\widetilde{\varrho}_{{\rm s},t} &= i (\varrho_{{\rm s},t}^{jj} H^{jj}(t) -H^{jj}(t) \varrho_{{\rm s},t}^{jj})  \notag \\
&\qquad  + \sum_{m=0}^{K} \left( \sum_{r=1}^{n_a} L^{jr}_m(t)  \varrho^{rr}_{{\rm s},t} (L^{jr}_m(t))^{\dag} \right.  \notag\\
&\qquad -\frac{1}{2}  \left( (L_m^{\dag}(t) L_m(t))^{jj} \varrho^{jj}_{{\rm s},t} \right.  \notag\\
&\qquad \left. + \varrho^{jj}_{{\rm s},t} (L_m^{\dag}(t) L_m(t))^{jj} \biggr) \vphantom{\sum_k}\right), 
\end{align*}
and $\mathcal{B}^{jj}_{10}(t)$ is a non-linear superoperator acting on $\widetilde{\varrho}_{{\rm s},t}$ as
\begin{align*}
\mathcal{B}^{jj}_{10}(t)\widetilde{\varrho}_{{\rm s},t} &= L_0(t) \varrho^{jj}_{{\rm s},t} +  \varrho^{jj}_{{\rm s},t} L_0(t)^{\dag}  \notag\\
&\qquad - \varrho^{jj}_{{\rm s},t}  \sum_{l=1}^{n_a} \mathrm{Tr}\left((L_0(t)+L_0(t)^{\dag})\varrho^{ll}_{{\rm s},t}\right).
\end{align*}

Assembling everything together, we can write
\begin{align}
d\widetilde{\varrho}_{{\rm sc},t} &= \mathcal{A}_{00}(t)\widetilde{\varrho}_{{\rm s},t}dt + (\mathcal{A}_{01}(t) dt +  dI_t \mathcal{B}_{01,\widetilde{\varrho}_{{\rm s},t}}(t))\widetilde{\varrho}_{{\rm sc},t}, \label{eq:SDE-rho-sc}
\end{align}
where $\mathcal{A}_{00}(t)$, $\mathcal{A}_{01}(t)$ and $\mathcal{B}_{01,\widetilde{\varrho}_{{\rm s},t}}(t)$  are assembled from $\mathcal{A}^{jk}_{00}(t)$, $\mathcal{A}^{jk}_{01}(t)$  and $\mathcal{B}^{jk}_{01,\widetilde{\varrho}_{{\rm s},t}}(t)$ for all $j \neq k$, respectively, in an obvious manner. Analogously, we also have
\begin{align}
d\widetilde{\varrho}_{{\rm s},t} &= \mathcal{A}_{11}(t)\widetilde{\varrho}_{{\rm sc},t} dt+ \left(\mathcal{A}_{10}(t) dt +  dI_t \mathcal{B}_{10}(t)\right)\widetilde{\varrho}_{{\rm s},t}, \label{eq:SDE-rho-s}
\end{align}
where $\mathcal{A}_{11}(t)$, $\mathcal{A}_{10}(t)$ and $\mathcal{B}_{10}(t)$  are assembled from $\mathcal{A}^{jj}_{11}(t)$, $\mathcal{A}^{jj}_{10}(t)$  and $\mathcal{B}^{jj}_{10}(t)$ for all $j$, respectively, in an obvious way.

For a fixed $\widetilde{\varrho}_{{\rm s},t}$, define $\Psi_{t,t_0}$ as the solution to the SDE,
\begin{align}
d\Psi_{t,t_0} &= \left( \mathcal{A}_{01}(t) dt   + dI_t \mathcal{B}_{01,\widetilde{\varrho}_{{\rm s},t}}(t) \right) \Psi_{t,t_0}, \label{eq:SDE-Psi}
\end{align}
with initial condition $\Psi_{t_0,t_0} = \mathcal{I}$. Here $\Psi_{t,t_0}$ is the stochastic exponential of $\int_{t_0}^t \left( \mathcal{A}_{01}(t')  dt'   + dI_{t'} \mathcal{B}_{01,\widetilde{\varrho}_{{\rm s},t'}}(t') \right)$. The next lemma and theorem follow by the same proof steps as for Lemma \ref{lem:sol-varrho-q} and Theorem   \ref{thm:sol-varrho-p} and their proofs are omitted.
\begin{lemma}
\label{lem:sol-varrho-q-2}
The solution to \eqref{eq:SDE-rho-sc} is
$$
\widetilde{\varrho}_{{\rm sc},t} = \Psi_{t,t_0} \widetilde{\varrho}_{{\rm sc},t_0}  + \Psi_{t,t_0}\int_{t_0}^t \Psi_{t',t_0}^{-1} \mathcal{A}_{00}(t')\widetilde{\varrho}_{{\rm s},t'} dt'
$$
\end{lemma}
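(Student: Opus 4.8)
The plan is to recognize \eqref{eq:SDE-rho-sc} as a linear matrix-valued SDE in the single unknown $\widetilde{\varrho}_{{\rm sc},t}$, once $\widetilde{\varrho}_{{\rm s},\cdot}$ is treated as a given (driving) process exactly as it is in the definition of $\Psi_{t,t_0}$ in \eqref{eq:SDE-Psi}. The multiplicative-noise part of the dynamics is carried by the generator $\mathcal{A}_{01}(t)\,dt + dI_t\,\mathcal{B}_{01,\widetilde{\varrho}_{{\rm s},t}}(t)$, whose stochastic exponential is precisely $\Psi_{t,t_0}$, while the term $\mathcal{A}_{00}(t)\widetilde{\varrho}_{{\rm s},t}\,dt$ plays the role of an additive, finite-variation inhomogeneity. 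This is structurally identical to \eqref{eq:SDE-q-2}, so the strategy is to reproduce verbatim the stochastic variation-of-constants argument used for Lemma \ref{lem:sol-varrho-q}, invoking \cite[Theorem 1.2]{DY08}.

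Concretely, first I would pass to the real finite-dimensional vector/matrix representation already established in Section \ref{sec:results}, so that \cite[Theorem 1.2]{DY08} applies directly to the assembled block coefficients from \eqref{eq:coupled-SME-sc}. I would then make the identifications $H(t)=\widetilde{\varrho}_{{\rm sc},t_0}+\int_{t_0}^t \mathcal{A}_{00}(t')\widetilde{\varrho}_{{\rm s},t'}\,dt'$ and $L(t)=\int_{t_0}^t\bigl(\mathcal{A}_{01}(t')\,dt'+dI_{t'}\,\mathcal{B}_{01,\widetilde{\varrho}_{{\rm s},t'}}(t')\bigr)$, so that $\Psi_{t,t_0}$ is the stochastic exponential of $L$. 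The one quantity that must be checked is the cross term $G(t)$ in \cite[Theorem 1.2]{DY08}: since $\widetilde{\varrho}_{{\rm s},\cdot}$ is continuous and hence bounded on each compact interval along every sample path, $\mathcal{A}_{00}(t')\widetilde{\varrho}_{{\rm s},t'}$ is pathwise integrable and $H(t)$ has finite variation. Its continuous martingale part $H^c$ is therefore constant, the quadratic covariation $\langle H^c,L^c\rangle$ vanishes, and hence $G(t)=H(t)$, yielding the stated closed form.

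The main obstacle is bookkeeping rather than genuine difficulty: one must confirm that the coefficient superoperators $\mathcal{A}_{01}(t)$, $\mathcal{B}_{01,\widetilde{\varrho}_{{\rm s},t}}(t)$, and $\mathcal{A}_{00}(t)$, assembled blockwise from the $j\neq k$ components of \eqref{eq:coupled-SME-sc}, are bona fide matrix-valued semimartingale coefficients, so that $\Psi_{t,t_0}$ exists and is invertible for each $t$. This rests on the continuity and adaptedness of the solution to \eqref{eq:sa-SME}, which is the standing assumption inherited from Lemma \ref{lem:sol-varrho-q}; since the block coefficients are built from bounded operators on the finite-dimensional principal system together with the continuous adapted process $\widetilde{\varrho}_{{\rm s},\cdot}$, their semimartingale property is immediate. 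With this verified, the proof is word-for-word that of Lemma \ref{lem:sol-varrho-q}, which is why it may safely be omitted in the main text.
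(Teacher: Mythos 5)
Your proposal is correct and follows exactly the route the paper intends: the paper omits the proof precisely because it is a verbatim repetition of the variation-of-constants argument for Lemma~\ref{lem:sol-varrho-q} via \cite[Theorem 1.2]{DY08}, with the same identifications of $H(t)$ and $L(t)$, the same finite-variation argument showing $\langle H^c,L^c\rangle=0$, and hence $G(t)=H(t)$. Your additional remarks on the semimartingale property of the assembled block coefficients are consistent with the paper's standing assumptions and add nothing that conflicts with its argument.
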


 \begin{theorem}
\label{thm:SDE-NM-s}
The matrix-valued stochastic process $\widetilde{\varrho}_{{\rm s},\cdot}$ satisfies the SDE
\begin{align}
d\widetilde{\varrho}_{{\rm s},t} &= \left(\mathcal{A}_{11}(t) \Psi_{t,t_0} \widetilde{\varrho}_{{\rm sc},t_0}  + \int_{t_0}^t \mathcal{K}_{\rm s}(t,t') \widetilde{\varrho}_{{\rm s},t'} dt' \right.
 \notag \\
&\qquad \left. + \mathcal{A}_{10}(t)\widetilde{\varrho}_{{\rm s},t}  \vphantom{\int_0^t} \right) dt + \mathcal{B}_{10}(t) \widetilde{\varrho}_{{\rm s},t} dI_t , \label{eq:SDE-s} \end{align}
where $\Psi_{t,t_0}$ is the unique solution to the SDE \eqref{eq:SDE-Psi} and $\mathcal{K}_{\rm s}(t,t') = \mathcal{A}_{11}(t)\Psi_{t,t_0} \Psi_{t',t_0}^{-1}\mathcal{A}_{00}(t')$ is a two-time stochastic kernel. The conditional state is then given by $\varrho_{{\rm s},\cdot} = \sum_{k=1}^{n_a} \varrho^{kk}_{{\rm s},\cdot}$ (where $\varrho^{kk}_{{\rm s},\cdot}$ appears as the $k$-th element of $\widetilde{\varrho}_{{\rm s},\cdot}$) and the unconditional state by $\rho_{{\rm s},\cdot} = \sum_{k=1}^{n_a} \mathbb{E}\left[\varrho^{kk}_{{\rm s},\cdot}\right]$. 
\end{theorem}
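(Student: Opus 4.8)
The plan is to follow the same two-step route used for Lemma~\ref{lem:sol-varrho-q} and Theorem~\ref{thm:sol-varrho-p}, now applied to the explicit block decomposition \eqref{eq:SDE-rho-sc}--\eqref{eq:SDE-rho-s} rather than to the abstract $\mathcal{P}/\mathcal{Q}$ system \eqref{eq:SDE-p}--\eqref{eq:SDE-q}. The first step is to solve the off-diagonal system \eqref{eq:SDE-rho-sc} for $\widetilde{\varrho}_{{\rm sc},\cdot}$ in terms of the diagonal blocks $\widetilde{\varrho}_{{\rm s},\cdot}$, which is the content of Lemma~\ref{lem:sol-varrho-q-2}; the second step is to substitute that solution into the drift of \eqref{eq:SDE-rho-s} and read off \eqref{eq:SDE-s}.

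For Lemma~\ref{lem:sol-varrho-q-2} I would first confirm the crucial structural point that, for a given trajectory $\widetilde{\varrho}_{{\rm s},\cdot}$, equation \eqref{eq:SDE-rho-sc} is a genuinely \emph{linear} SDE in $\widetilde{\varrho}_{{\rm sc},\cdot}$ with an additive inhomogeneous term $\mathcal{A}_{00}(t)\widetilde{\varrho}_{{\rm s},t}\,dt$. The only potential nonlinearity sits in the back-action operator $\mathcal{B}_{01,\widetilde{\varrho}_{{\rm s},t}}(t)$, but inspection of $\mathcal{B}^{jk}_{01,\widetilde{\varrho}_{{\rm s},t}}(t)$ shows that its dependence on the off-diagonal blocks is only through the multiplicative factors $L_0\varrho^{jk}_{{\rm s}}+\varrho^{jk}_{{\rm s}}L_0^{\dag}$ and the scalar $\sum_l\mathrm{Tr}((L_0+L_0^{\dag})\varrho^{ll}_{{\rm s},t})$, the latter involving only diagonal blocks; hence it is linear in $\widetilde{\varrho}_{{\rm sc},\cdot}$ once $\widetilde{\varrho}_{{\rm s},\cdot}$ is frozen. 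This is the exact analogue of the observation after \eqref{eq:SDE-p} that the measurement back-action decouples the $q$ component, and it is what licenses the introduction of the stochastic exponential $\Psi_{t,t_0}$ via \eqref{eq:SDE-Psi}. I would then invoke \cite[Theorem~1.2]{DY08} with the identifications $H(t)=\widetilde{\varrho}_{{\rm sc},t_0}+\int_{t_0}^t\mathcal{A}_{00}(t')\widetilde{\varrho}_{{\rm s},t'}\,dt'$ and $L(t)=\int_{t_0}^t(\mathcal{A}_{01}(t')\,dt'+dI_{t'}\,\mathcal{B}_{01,\widetilde{\varrho}_{{\rm s},t'}}(t'))$, exactly as in the proof of Lemma~\ref{lem:sol-varrho-q}.

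The step I expect to require the most care is the quadratic-covariation bookkeeping in the stochastic variation-of-constants formula. The formula of \cite[Theorem~1.2]{DY08} carries a correction term built from the quadratic covariation $\langle H^c,L^c\rangle$ between the continuous-martingale parts of $H$ and $L$. Here $H$ is, by construction, a process of finite variation on each interval (its increment is the Lebesgue integral of $\mathcal{A}_{00}(t)\widetilde{\varrho}_{{\rm s},t}$, which is bounded pathwise over compact intervals by the assumed continuity of the solution of \eqref{eq:sa-SME}), so $H^c$ is constant and the covariation vanishes. Consequently the correction term collapses to $G(t)=H(t)$ and the clean formula of Lemma~\ref{lem:sol-varrho-q-2} results, with no surviving It\^{o} cross-term. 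This is precisely the argument already spelled out in the proof of Lemma~\ref{lem:sol-varrho-q}, transported to the present matrix blocks.

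Finally, for Theorem~\ref{thm:SDE-NM-s} itself, I would substitute the solution $\widetilde{\varrho}_{{\rm sc},t}$ of Lemma~\ref{lem:sol-varrho-q-2} into the single occurrence of $\widetilde{\varrho}_{{\rm sc},t}$ in the drift $\mathcal{A}_{11}(t)\widetilde{\varrho}_{{\rm sc},t}\,dt$ of \eqref{eq:SDE-rho-s}. The homogeneous part contributes $\mathcal{A}_{11}(t)\Psi_{t,t_0}\widetilde{\varrho}_{{\rm sc},t_0}$, while the inhomogeneous part produces $\mathcal{A}_{11}(t)\Psi_{t,t_0}\int_{t_0}^t\Psi_{t',t_0}^{-1}\mathcal{A}_{00}(t')\widetilde{\varrho}_{{\rm s},t'}\,dt'$, which is exactly $\int_{t_0}^t\mathcal{K}_{\rm s}(t,t')\widetilde{\varrho}_{{\rm s},t'}\,dt'$ upon setting $\mathcal{K}_{\rm s}(t,t')=\mathcal{A}_{11}(t)\Psi_{t,t_0}\Psi_{t',t_0}^{-1}\mathcal{A}_{00}(t')$. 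Leaving the remaining drift $\mathcal{A}_{10}(t)\widetilde{\varrho}_{{\rm s},t}$ and the diffusion $\mathcal{B}_{10}(t)\widetilde{\varrho}_{{\rm s},t}\,dI_t$ untouched then yields \eqref{eq:SDE-s}. The identities $\varrho_{{\rm s},\cdot}=\sum_k\varrho^{kk}_{{\rm s},\cdot}$ and $\rho_{{\rm s},\cdot}=\sum_k\mathbb{E}[\varrho^{kk}_{{\rm s},\cdot}]$ follow from the reduction relations recalled in Section~\ref{sec:background}.
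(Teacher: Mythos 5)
Your proposal is correct and follows exactly the route the paper intends: the paper explicitly omits the proofs of Lemma~\ref{lem:sol-varrho-q-2} and Theorem~\ref{thm:SDE-NM-s}, stating that they follow by the same steps as Lemma~\ref{lem:sol-varrho-q} and Theorem~\ref{thm:sol-varrho-p}, which is precisely the two-step argument (stochastic variation of constants via \cite[Theorem~1.2]{DY08} with vanishing quadratic covariation, then substitution into the drift) that you carry out. Your additional checks of linearity in $\widetilde{\varrho}_{{\rm sc},\cdot}$ and of the finite-variation property of the inhomogeneous term are consistent with the paper's reasoning.
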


Note that $\widetilde{\varrho}_{{\rm sc},t_0}$ vanishes for a product initial state $\varrho_{\mathrm{sa},t_0}=\rho_{\rm s}\otimes \rho_{\rm a}$ when  $\{|\phi_k \rangle\}$ are chosen as the eigenstates of $\rho_{\rm a}$ and for  initial states of the form $\varrho_{\mathrm{sa},t_0}=\frac{1}{n_a}\sum_{k=1}^{n_a} \rho_{\mathrm{s},k} \otimes |\phi_k \rangle \langle \phi_k|$. Also, if $\Psi_{t,t_0}$ is asymptotically stable in the sense that $\mathop{\lim}_{ t_0 \rightarrow -\infty} \Psi_{t,t_0}=0$ and decays sufficiently fast so that $\mathop{\lim}_{t_0 \rightarrow -\infty} \int_{t_0}^t \mathcal{K}_{\rm s}(t,t')  \widetilde{\varrho}_{{\rm s},t'}dt' $ exists for any $t$ and any $\{\widetilde{\varrho}_{{\rm s},t'},\;-\infty<t'<t \}$ then for $t_0 \rightarrow -\infty$,  one arrives at the following SDE for $\widetilde{\varrho}_{{\rm s},\cdot}$ that requires no knowledge of $\widetilde{\varrho}_{{\rm sc},t_0}$,
\begin{align}
d\widetilde{\varrho}_{{\rm s},t} &=  \left(\int_{-\infty}^t \mathcal{K}_{\rm s}(t,t')\widetilde{\varrho}_{{\rm s},t'} dt'
 + \mathcal{A}_{10}(t)\widetilde{\varrho}_{{\rm s},t} \right)dt \notag\\
&\qquad +  \mathcal{B}_{10}(t)\widetilde{\varrho}_{{\rm s},t} dI_t.  \label{eq:SDE-s-asymptotic}
\end{align}
The reader is again referred to \cite{Palamarchuk24} and the references therein for conditions for asymptotic stability.

\section{Conclusion}
\label{sec:conclu}
This paper has derived SDEs for the reduced evolution of continuously-monitored non-Markovian quantum systems, reduced meaning that the SDEs involve only  operators of the principal system of interest and not on any operators of the auxiliary. These SDEs give a more compact representation of the stochastic dynamics of the principal system with a density operator  of dimension $n_s^2 \times n_a$ as opposed to dimension $(n_s \times n_a)^2$ for the Markovian embedding, where $n_s=\mathrm{dim}(\mathfrak{h}_{\mathrm{s}})$. The dimension $n_a$ can in principle be  large, possibly much larger than $n_s$. 

The two-time  kernel functions $\mathcal{K}_{\rm s} $ given  by  $\mathcal{K}_{\rm s}(t,t') = \mathcal{L}(t)^{pq} \Phi_{t,t_0} \Phi_{t',t_0}^{-1}\mathcal{L}(t')^{qp}$ in  SDEs \eqref{eq:SDE-p} and  \eqref{eq:SDE-p-asymptotic}, or $\mathcal{K}_{\rm s}(t,t') = \mathcal{A}_{11}(t) \Psi_{t,t_0} \Psi_{t,t_0}^{-1}\mathcal{A}_{00}(t')$ in SDEs \eqref{eq:SDE-s} and \eqref{eq:SDE-s-asymptotic}, which emerge in the derivation, are anticipated to play an important role.  These kernel functions summarise the influence of the auxiliary on the principal system. Therefore, relevant questions are whether such kernel functions can be efficiently approximated in, for instance, a data-driven way from experimental continuous-measurement data, and how feedback controllers can be designed for non-Markovian quantum systems described by the SDEs derived in this work. 
\bibliographystyle{ieeetran}
\bibliography{cdc2025}

\begin{thebibliography}{10}
\providecommand{\url}[1]{#1}
\csname url@samestyle\endcsname
\providecommand{\newblock}{\relax}
\providecommand{\bibinfo}[2]{#2}
\providecommand{\BIBentrySTDinterwordspacing}{\spaceskip=0pt\relax}
\providecommand{\BIBentryALTinterwordstretchfactor}{4}
\providecommand{\BIBentryALTinterwordspacing}{\spaceskip=\fontdimen2\font plus
\BIBentryALTinterwordstretchfactor\fontdimen3\font minus
  \fontdimen4\font\relax}
\providecommand{\BIBforeignlanguage}[2]{{%
\expandafter\ifx\csname l@#1\endcsname\relax
\typeout{** WARNING: IEEEtran.bst: No hyphenation pattern has been}%
\typeout{** loaded for the language `#1'. Using the pattern for}%
\typeout{** the default language instead.}%
\else
\language=\csname l@#1\endcsname
\fi
#2}}
\providecommand{\BIBdecl}{\relax}
\BIBdecl

\bibitem{HP84}
R.~L. Hudson and K.~R. Parthasarathy, ``{Quantum Ito's formula and stochastic
  evolution},'' \emph{Commun. Math. Phys.}, vol.~93, pp. 301--323, 1984.

\bibitem{GZ04}
C.~W. Gardiner and P.~Zoller, \emph{Quantum Noise: A Handbook of Markovian and
  Non-Markovian Quantum Stochastic Methods with Applications to Quantum
  Optics}, 3rd~ed.\hskip 1em plus 0.5em minus 0.4em\relax Berlin and New York:
  Springer-Verlag, 2004.

\bibitem{BvHJ07}
L.~Bouten, R.~{van Handel}, and M.~R. James, ``An introduction to quantum
  filtering,'' \emph{SIAM J. Control Optim.}, vol.~46, pp. 2199--2241, 2007.

\bibitem{WM10}
H.~M. Wiseman and G.~J. Milburn, \emph{Quantum Measurement and Control}.\hskip
  1em plus 0.5em minus 0.4em\relax Cambridge University Press, 2010.

\bibitem{NY17}
H.~I. Nurdin and N.~Yamamoto, \emph{Linear Dynamical Quantum Systems: Analysis,
  Synthesis, and Control}, ser. Communications and Control Engineering.\hskip
  1em plus 0.5em minus 0.4em\relax Cham: Switzerland: Springer, 2017.

\bibitem{NC10}
M.~Nielsen and I.~Chuang, \emph{Quantum Computation and Quantum Information:
  10th Anniversary Edition}.\hskip 1em plus 0.5em minus 0.4em\relax Cambridge
  University Press, 2010.

\bibitem{GJN11}
J.~Gough, M.~R. James, and H.~I. Nurdin, ``Quantum master equation and filter
  for systems driven by filed in a single photon state,'' in \emph{Proc. IEEE
  Conference on Decision and Control}, 2011, pp. 5570--5576.

\bibitem{GJNC12}
J.~Gough, M.~R. James, H.~I. Nurdin, and J.~Combes, ``Quantum filtering for
  systems driven by fields in single-photon states or superposition of coherent
  states,'' \emph{Phys. Rev. A}, vol.~86, p. 043819, 2012.

\bibitem{GJN13}
J.~E. Gough, M.~R. James, and H.~I. Nurdin, ``Quantum filtering for systems
  driven by fields in single photon states and superposition of coherent states
  using non-markovian embeddings,'' \emph{Quantum Inf Process}, vol.~12, pp.
  1469--1499, 2013.

\bibitem{GJN14}
------, ``Quantum trajectories for a class of continuous matrix product input
  states,'' \emph{New J. Phys.}, vol.~16, p. 075008, 2014.

\bibitem{Clerk10}
A.~A. Clerk, M.~H. Devoret, S.~M. Girvin, F.~Marquardt, and R.~J. Schoelkopf,
  ``Introduction to quantum noise, measurement, and amplification,'' \emph{Rev.
  Mod. Phys.}, vol.~82, p. 1155, 2010.

\bibitem{Nurd23}
H.~I. Nurdin, ``Markovian embeddings of non-{M}arkovian quantum systems:
  Coupled stochastic and quantum master equations for non-{M}arkovian quantum
  systems,'' in \emph{Proc. IEEE Conference on Decision and Control, pp. 5570 -
  5576}, 2023, pp. 5570--5576.

\bibitem{Imamoglu94}
A.~Imamoglu, ``Stochastic wave-function approach to non-{M}arkovian systems,''
  \emph{Phys. Rev. A}, vol.~50, no.~5, pp. 3650--3653, 1994.

\bibitem{DBG01}
B.~J. Dalton, S.~M. Barnett, and B.~M. Garraway, ``Theory of pseudomodes in
  quantum optical processes,'' \emph{Phys. Rev. A}, vol.~64, p. 053813, 2001.

\bibitem{Mascherpa20}
F.~Mascherpa \emph{et~al.}, ``Optimized auxiliary oscillators for the
  simulation of general open quantum systems,'' \emph{Phys. Rev. A}, vol. 101,
  p. 052108, 2020.

\bibitem{CAG19}
F.~Chen, E.~Arrigoni, and M.~Galperin, ``Markovian treatment of non-{M}arkovian
  dynamics of open fermionic systems,'' \emph{New J. Phys.}, vol.~21, p.
  123035, 2019.

\bibitem{CKS17}
J.~Combes, J.~Kerckhoff, and M.~Sarovar, ``The {SLH} framework for modeling
  quantum input-output networks,'' \emph{Adv. Phys. X}, vol.~2, no. 784, 2017.

\bibitem{HC93}
H.~Carmichael, \emph{An Open Systems Approach to Quantum Optics}.\hskip 1em
  plus 0.5em minus 0.4em\relax Berlin: Springer, 1993.

\bibitem{DGS98}
L.~Di\'{o}si, N.~Gisin, and W.~T. Strunz, ``Non-{M}arkovian quantum state
  diffusion,'' \emph{Phys. Rev. A}, vol.~58, p. 1699, 1998.

\bibitem{KS12}
S.~Kr\"{o}nke and W.~T. Strunz, ``Non-{M}arkovian quantum trajectories,
  instruments and time-continuous measurements,'' \emph{J. Phys. A}, vol.~45,
  no.~5, p. 055305, 2012.

\bibitem{GDA25}
J.~Gough, H.~Ding, and N.~H. Amini, ``Reproducing kernel {H}ilbert space
  approach to non-{M}arkovian quantum stochastic models,'' \emph{J. Math.
  Phys.}, p. 042102, 2025.

\bibitem{BG09}
A.~Barchielli and M.~Gregoratti, ``The stochastic master equation: {P}art
  {II},'' in \emph{Quantum Trajectories and Measurements in Continuous Time},
  ser. Lecture Notes in Physics, 2009, vol. 782, pp. 111--123.

\bibitem{DY08}
J.~Duan and J.~Yan, ``General matrix-valued inhomogeneous linear stochastic
  differential equations and applications,'' \emph{Stat. Prob. Lett.}, vol.~78,
  no.~15, pp. 2361--2365, 2008.

\bibitem{Bacchini11}
B.~Vacchini, ``Non-{M}arkovian dynamics in open quantum systems,'' 2011,
  lecture notes 2224-2 in the ``School on New Trends in Quantum Dynamics and
  Quantum Entanglement" (14-18 February 2011), The Abdussalam Centre for
  Theoretical Physics.

\bibitem{Palamarchuk24}
E.~S. Palamarchuk, ``On asymptotic behavior of solutions of linear
  multidimensional stochastic differential equations with multiplicative
  noise,'' \emph{Theory Probab. Appl.}, vol.~69, no.~3, pp. 372--390, 2024.

\end{thebibliography}

\end{document}